\documentclass[a4paper,11pt,twoside]{article}

\usepackage[affil-it]{authblk}
\usepackage[a4paper, inner=2.6cm, outer=2.6cm, top=3.3cm, bottom=4.2cm]{geometry}
\usepackage{fancyhdr}
\usepackage{garamond}
\usepackage{apalike}
\usepackage{amsthm}
\usepackage{amsmath}
\usepackage{algorithm}
\usepackage{algpseudocode}
\usepackage{graphicx}
\usepackage{color}
\usepackage{transparent}
\pagestyle{fancy}
\renewcommand{\thispagestyle}[1]{}
\setlength{\headsep}{3mm}
\fancyhead{}
\fancyfoot{}

\fancyhead[RO,LE]{\thepage}
\fancyhead[LO]{ResEff 2013 - BT2: Mathematical Optimization in the Presence of Uncertainties}

\author{Tong-Wook Shinn}
\affil{Department of Computer Science and Software Engineering\\
University of Canterbury\\
Christchurch, New Zealand}
\author{Tadao Takaoka}
\affil{Department of Computer Science and Software Engineering\\
University of Canterbury\\
Christchurch, New Zealand}

\title{Efficient Graph Algorithms for Network Analysis}

\newtheorem{definition}{Definition}[section]
\newtheorem{lemma}{Lemma}[section]
\newtheorem{theorem}{Theorem}[section]
\newtheorem{example}{Example}[section]

\begin{document}

\thispagestyle{fancy}
\maketitle

\subsection*{Keywords (required)}
Network analysis, Facility location, Bottleneck paths, Shortest paths, Fast algorithms

\section{Introduction}
\label{sec:intro}
We provide efficient algorithms for two major problems in network analysis. One is the Graph Center (GC) problem and the other is the Graph Bottleneck (GB) problem. The GC problem is relevant for facility location, while the GB problem is relevant for transportation and logistics.

The GC problem is to identify a pre-determined number of center vertices such that the distances or costs from (or to) the centers to (or from) other vertices is minimized. Let us take an example of dispatching fire engines from fire stations. The distance here is defined by the shortest distance from the closest station to a house. The problem is then to determine the center vertices (i.e. fire stations) so that the maximum distance to each house is minimized. In the case of hospitals the distance is defined by that from a house to the hospital. In the case of renewable resource centers, we may need to consider both distances, ``to'' and ``from''.

The bottleneck of a path is the minimum capacity of edges on the path. The Bottleneck Paths (BP) problem is to compute the paths that give us the maximum bottleneck values between pairs of vertices. The Graph Bottleneck (GB) problem is to find the minimum bottleneck value out of bottleneck paths for all possible pairs of vertices.

We give two similar algorithms that are based on binary search to solve the 1-center GC problem and the GB problem on directed graphs with unit edge costs. We achieve $\tilde{O}(n^{2.373})$\footnote{$\tilde{O}$ is a notation used to omit all polylog factors from the asymptotic time complexity.} worst case time complexity for both the 1-center GC problem and the GB problem, where $n$ is the number of vertices in the graph. This is better than the straightforward methods of solving the two problems in $O(n^{2.575})$ (\cite{Zwick}) and $O(n^{2.688})$ (\cite{DP}) time bounds, respectively. Note that the 2-center GC problem is investigated by \cite{Takaoka}.

We then combine the Bottleneck Paths (BP) problem with the well known Shortest Paths (SP) problem to compute the shortest paths for all possible flow values. We call this problem the Shortest Paths for All Flows (SP-AF) problem. We show that if the flow demand is uncertain, but between two consecutive capacity values, the unique shortest path can be computed to push that flow. If the uncertainty stretches over two intervals, we need to prepare two shortest paths to accommodate the uncertainty, etc. In introducing this new problem, we define a new semi-ring called the distance/flow semi-ring, and show that the well known algorithm by \cite{Floyd} can be used over the distance/flow semi-ring to solve the All Pairs Shortest Paths for All Flows (APSP-AF) problem. Further discussions of the SP-AF problem can be found in the paper by \cite{ST}.

\section{Preliminaries}
Let $G=\{V,E\}$ be a directed graph where $V$ is the set of vertices and $E$ is the set of edges. Let $|V| = n$ and $|E| = m$. We assume that the vertices are numbered from $1$ to $n$. Let $(i,j) \in E$ denote the edge from vertex $i$ to vertex $j$. Let $cost(i,j)$ and $cap(i,j)$ be the cost and capacity of the edge $(i,j)$, respectively. For the GC and GB problem in Sections \ref{sec:gc} and \ref{sec:gb}, respectively, we deal with graphs with unit edge costs, hence $cost(i,j) = 1$ for all $(i,j) \in E$. $cap(i,j)$ can be any non negative real number. Let $c$ be the maximum value of $cap(i,j)$.

Let $Z=X \star Y$ denote the Boolean matrix multiplication of matrices $X = \{x_{ij}\}$ and $Y = \{y_{ij}\}$, where $Z=\{z_{ij}\}$ is given by:
$$
z_{ij} = \bigvee\limits_{k=1}^{n} \{x_{ik} \land y_{kj}\}
$$

\begin{figure}
\def\svgwidth{270pt}
\begin{center}
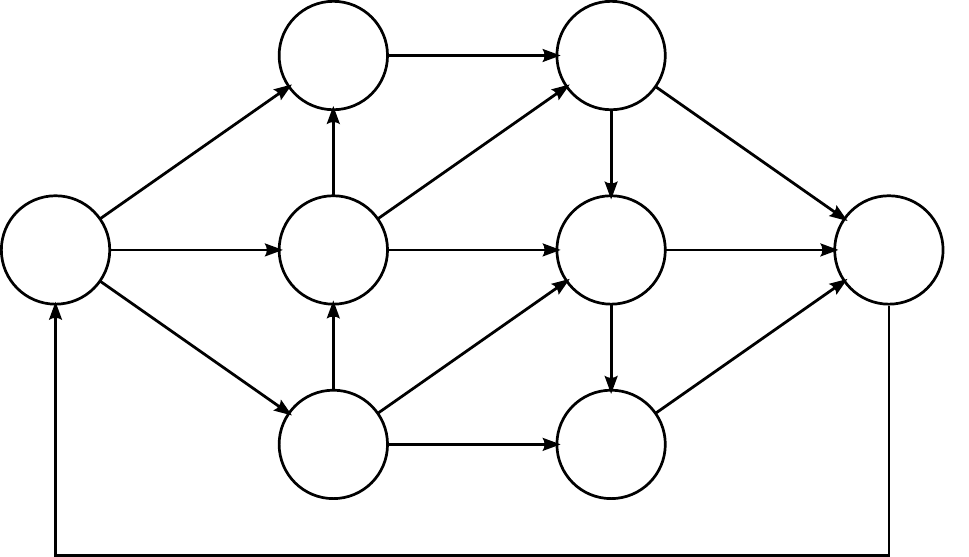
\end{center}
\caption{An example of a strongly connected directed graph with unit edge costs, with $n=8$ and $m=16$. Capacities are shown beside each edge.}
\label{fig:graph}
\end{figure}

\section{The 1-center GC problem}
\label{sec:gc}
The GC problem is closely related to the All Pairs Shortest Paths (APSP) problem. Let us assume that the APSP problem has been solved for the given graph, that is, the matrix $D^{*}$ has been solved with the shortest distance from vertex $i$ to vertex $j$ being $d^{*}_{ij}$. Then the 1-center is given by vertex $i$ that gives the minimum $\Delta$ in the following equation, where the value of $\Delta$ is the distance from the center to the farthest vertex:

$$
\Delta = \min\limits_{i=1}^{n} \{\max\limits_{j=1}^{n} d^{*}_{ij}\}
$$

The aim of our algorithm is to compute the center without computing $D^{*}$, that is, without solving the APSP problem, which is costly. As mentioned in Section \ref{sec:intro}, the current best time bound for solving the APSP problem is $O(n^{2.575})$ by \cite{Zwick}.

Let the threshold value $t$ be initialized to $n/2$. For simplicity we assume $n$ is a power of 2. Let a Boolean matrix $B$ be defined by its element $b_{ij}$ as follows: $b_{ij} = 1$ if there is an edge $(i,j)$, and 0 otherwise. We let $b_{ii}=1$ for all $i$. Let $B^{\ell}$ be the $\ell^{th}$ power of $B$ under Boolean matrix multiplication. We observe that for the matrix $B^{\ell}$, $b^{\ell}_{ij}$ = 1 if and only if $j$ is reachable from $i$ via a path whose path length is at most $\ell$. Let $C = B^{\ell}$. Let us compute:

$$
P(C) = \bigvee\limits_{i=1}^{n} \{\bigwedge\limits_{j=1}^{n} c_{ij}\}
$$

\noindent
from which we can derive the fact that $\Delta \leq \ell$ if and only if $P(C) = 1$. We can repeatedly halve the possible range $[\alpha,\beta]$ for $\Delta$ by adjusting the threshold value of $t$ through the binary search. Algorithm \ref{alg:gc} solves the GC problem in $\tilde{O}(n^{2.373})$ worst case time complexity.

\begin{algorithm}
\caption{Solve the 1-center GC problem}
\label{alg:gc}
\begin{algorithmic}[1]
\algnotext{EndFor}
\algnotext{EndIf}
\algnotext{EndWhile}
\algnotext{EndProcedure}
\State{Compute $B^{2}$, $B^{4}$, ..., $B^{n/2}$ by repeated squaring}\label{line:square}
\State{$\alpha \leftarrow 0$; $\beta \leftarrow n$}
\State{$C \leftarrow I$ /* $I$ is the unit Boolean matrix */}
\While{$\beta - \alpha > 1$}\label{line:while}
	\State{$t \leftarrow (\alpha + \beta)/2$}
	\State{$r \leftarrow (\beta - \alpha)/2$}
	\If{\Call{P}{$C \star B^{r}$} = 1}
		\State{$\beta \leftarrow t$}
	\Else
		\State{$\alpha \leftarrow t$}
		\State{$C \leftarrow C \star B^{r}$}
	\EndIf
\EndWhile
\State{$C \leftarrow C \star B^{r}$}
\State{Find row $i$ such that $c_{ij} = 1$ for all $j$ /* Vertex $i$ is the graph center */}\label{line:center}

\item[]

\Procedure{P}{$C$}
	\State{\Return{$\bigvee\limits_{i=1}^{n} \{\bigwedge\limits_{j=1}^{n} c_{ij}\}$}}
\EndProcedure
\end{algorithmic}
\end{algorithm}

\begin{lemma}
\label{lem:gc}
At the beginning of each iteration of the while-loop at line \ref{line:while} of Algorithm \ref{alg:gc}, $\alpha < \Delta \leq \beta$.
\end{lemma}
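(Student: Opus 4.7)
The plan is a proof by induction on the iteration count of the while-loop, carrying along an auxiliary invariant that makes the binary search verifiable: at the top of each iteration, in addition to $\alpha < \Delta \le \beta$, we also have $C = B^{\alpha}$. This second invariant is the real workhorse, because the paper has already established that for $C' = B^{\ell}$, $P(C') = 1$ if and only if $\Delta \le \ell$, and we need to apply this fact to the matrix $C \star B^{r}$ tested inside the loop.

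For the base case, before entering the loop $\alpha = 0$, $\beta = n$, and $C = I = B^{0}$, so the auxiliary invariant is immediate. Since the graph has $n$ vertices, the eccentricity of any vertex is at most $n-1$, so $\Delta \le n-1 < n = \beta$; and assuming the graph has more than one vertex, $\Delta \ge 1 > 0 = \alpha$. (If $n=1$ the loop body is vacuous.) For the inductive step I would assume $\alpha < \Delta \le \beta$ and $C = B^{\alpha}$ at the start of an iteration, set $t = (\alpha+\beta)/2$, $r = (\beta-\alpha)/2$, and observe that $C \star B^{r} = B^{\alpha} \star B^{r} = B^{\alpha + r} = B^{t}$. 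The earlier reachability fact then gives $P(C \star B^{r}) = 1 \iff \Delta \le t$.

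The two branches now follow mechanically. If $P(C \star B^{r}) = 1$, then $\Delta \le t$, so after updating $\beta \leftarrow t$ we have $\alpha < \Delta \le \beta_{\text{new}}$, while $C$ is unchanged so $C = B^{\alpha}$ still holds. If $P(C \star B^{r}) = 0$, then $\Delta > t$, so after updating $\alpha \leftarrow t$ and $C \leftarrow C \star B^{r} = B^{t}$ we have $\alpha_{\text{new}} < \Delta \le \beta$ and $C = B^{t} = B^{\alpha_{\text{new}}}$. In either case both invariants are re-established, closing the induction.

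The only mildly delicate step is confirming the auxiliary invariant $C = B^{\alpha}$; everything else is bookkeeping on the interval $[\alpha,\beta]$. Once that auxiliary invariant is in place, the lemma's inequality follows directly from the characterization of $P$ proved earlier in the section, and the correctness of the final center extraction at line \ref{line:center} is an immediate corollary, since after the loop $\beta - \alpha = 1$, $\beta = \Delta$, and $C \star B^{r} = B^{\Delta}$ must contain a row of all ones corresponding to a 1-center.
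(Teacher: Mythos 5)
Your proof is correct and follows essentially the same route as the paper's: induction on the loop iterations, with the auxiliary invariant $C = B^{\alpha}$ established first and then used, via the characterization $P(B^{\ell}) = 1 \Leftrightarrow \Delta \leq \ell$, to justify the two branches of the binary search. Your version merely spells out the base case ($\Delta \leq n-1 < n$ and $\Delta \geq 1$ for $n > 1$) a bit more explicitly than the paper does.
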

\begin{proof}
Proof is based on induction on the repetition of the while-loop. We first prove that at the beginning of each iteration of the while-loop, $C = B^{\alpha}$. At the $0^{th}$ repetition, $C = B^{0}$ and $\alpha = 0$. Suppose the lemma is true for $\alpha$. If $P(C \star B^{r}) = 1$, $C$ and $\alpha$ are unchanged. If $P(C \star B^{r}) = 0$, $\alpha$ becomes $t$ and $C$ becomes $B^{t}$. Now the lemma is true after the initialization of $\alpha$ and $\beta$. Suppose the lemma is true at the beginning of an iteration. If $P(C \star B^{r}) = 1$, $\alpha < \Delta \leq \alpha + r = t$ and $\beta$ is set to $t$. If $P(C \star B^{r}) = 0$, $t < \Delta \leq \beta$ and $\alpha$ is set to $t$.
\end{proof}

\begin{theorem}
Algorithm \ref{alg:gc} computes the graph center in $\tilde{O}(n^{\omega})$ time, where $\omega = 2.373$ (\cite{Williams}).
\end{theorem}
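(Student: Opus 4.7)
The plan is to separate the argument into a correctness check and a running-time accounting, leaning on Lemma \ref{lem:gc} for correctness and on the cost of Boolean matrix multiplication for the time bound.

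First I would argue correctness. By Lemma \ref{lem:gc}, throughout the while-loop we maintain $\alpha < \Delta \le \beta$ and $C = B^{\alpha}$. Since the loop halves $\beta-\alpha$ on each iteration and exits when $\beta-\alpha = 1$, the invariant forces $\Delta = \beta = \alpha+1$ on exit. In the last completed iteration we had $\beta - \alpha = 2$, so the current value of $r$ equals $1$, and line 14 replaces $C$ by $C \star B^{r} = B^{\alpha+1} = B^{\Delta}$. By the characterization of powers of $B$, the entry $c_{ij}$ is then $1$ iff vertex $j$ is reachable from $i$ in at most $\Delta$ steps, so any row $i$ with all $1$s achieves $\max_{j} d^{*}_{ij} \le \Delta$ and, by minimality of $\Delta$, is a $1$-center. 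Such a row must exist because $P(B^{\Delta}) = 1$ by definition of $\Delta$.

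Next I would bound the running time. The precomputation on line \ref{line:square} performs $\log(n/2)$ Boolean squarings, each reducible to a matrix multiplication over $\{0,1\}$ and hence doable in $O(n^{\omega})$ time using the rectangular multiplication bound of \cite{Williams}, giving $O(n^{\omega}\log n)$ overall. The while-loop halves $\beta-\alpha$ each round starting from $n$, so it runs $O(\log n)$ times; each iteration performs one multiplication $C \star B^{r}$ (where $r$ is always a precomputed power of $2$) plus one call to $P$, costing $O(n^{\omega}) + O(n^{2}) = O(n^{\omega})$. The post-loop multiplication on line 14 is another $O(n^{\omega})$ step, and scanning for the all-ones row on line \ref{line:center} costs $O(n^{2})$. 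Summing yields $\tilde{O}(n^{\omega})$.

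The only delicate point I would want to double-check is that $r$ is always a power of $2$ during the loop (so that $B^{r}$ has really been precomputed on line \ref{line:square}). This follows by an easy induction: $\beta-\alpha$ starts at $n$, a power of $2$, and is replaced each iteration by $(\beta-\alpha)/2$ whether the test succeeds or fails, so $r = (\beta-\alpha)/2$ is always one of $n/2, n/4, \ldots, 1$. With that subtlety handled, the two bounds combine to give the claimed $\tilde{O}(n^{\omega})$ complexity.
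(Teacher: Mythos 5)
Your proposal is correct and follows essentially the same route as the paper's proof: correctness via the invariant of Lemma \ref{lem:gc} together with the final multiplication bringing $C$ to $B^{\Delta}$, and the time bound from $O(\log n)$ precomputed squarings plus $O(\log n)$ loop iterations each costing $O(n^{\omega})$. You in fact supply more detail than the paper does (the $r=1$ final step and the check that $r$ is always a power of two); the only nit is that the result of \cite{Williams} is a bound for square, not rectangular, Boolean/integer matrix multiplication.
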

\begin{proof}
Upon termination of the while loop, we see $\alpha + 1 = \beta$, $\Delta = \beta$ and $C = B^{\alpha}$. Thus line \ref{line:center} of the program successfully computes the graph center. The computation of line \ref{line:square} for the powers of $B$ takes $O(n^{\omega}\log{n})$, where $O(n^{\omega})$ is the time taken for multiplying two $n$-by-$n$ Boolean matrices. Obviously the iteration in the while loop is done $O(\log{n})$ times. Thus the total time is $O(n^{\omega}\log{n}) = \tilde{O}(n^{\omega})$.
\end{proof}

Note that Algorithm \ref{alg:gc} is based on the ``to'' distance. If we use the ``from'' distance, $i$ and $j$ will be swapped in $P(C)$. If we consider both distances, we take the $\land$ operation of the two formulae.

For a graph with integer edge costs bounded by $c$, we can transform $G$ to have $O(cn)$ vertices and unit edge costs such that the 1-center GC problem can be solved in $\tilde{O}((cn)^{\omega})$ time (\cite{AGM}).

\section{The GB problem}
\label{sec:gb}
Let $\Theta$ be the bottleneck value of the entire network. The straightforward method to compute $\Theta$ would be to solve the All Pairs Bottleneck Paths (APBP) problem and find the minimum among the bottleneck paths. The current best time bound for solving the APBP problem is $\tilde{O}(n^{2.688})$ by \cite{DP}. We avoid solving the APBP problem to compute $\Theta$.

\begin{example}
\label{eg:bottleneck}
The value of $\Theta$ for the graph in Figure \ref{fig:graph} is 9, which is the capacity of edges $(2,5)$ and $(7,8)$.
\end{example}

Let the threshold value $t$ be initialized to $c/2$, where $c$ is the maximum capacity. Let Boolean matrix $B$ be defined by its element $b_{ij}$ as follows: $b_{ij} = 1$ if $cap(i,j) \geq t$, and 0 otherwise. Let us compute the transitive closure, $B^{∗}$, of $B$. Then, from the equation:

$$
b^{*}_{ij}=\Sigma \{b_{ik_{1}}b_{k_{1}k_{2}}...b_{k_{r}j}\mbox{ } | \mbox{ all possible paths } (i,k_{1}), (k_{1},k_{2}), ..., (k_{r},j)\}
$$

\noindent
we observe that $b^{∗}_{ij} = 1$ if and only if $b_{ik_{1}} = 1$, $b_{k_{1}k_{2}} = 1$, ..., $b_{k_{r}j} = 1$ for some path. From this we derive the fact that $\Theta \geq t$ if and only if $b^{*}_{ij} > 0$ for all $i$ and
$j$. We can repeatedly halve the possible range $[\alpha, \beta]$ for $\Theta$ by adjusting the threshold
value of $t$ through binary search.

\begin{algorithm}
\caption{Solve the GB problem}
\label{alg:gb}
\begin{algorithmic}[1]
\algnotext{EndFor}
\algnotext{EndIf}
\algnotext{EndWhile}
\State{$\alpha \leftarrow 0$; $\beta \leftarrow c$}
\While{$\beta - \alpha > 0$}
	\State{$t \leftarrow (\alpha + \beta)/2$}
	\For{$i \leftarrow 1$ to $n$; $j \leftarrow 1$ to $n$}
		\If{$cap(i,j) > t$}
			\State{$b_{ij} \leftarrow 1$}
		\Else
			\State{$b_{ij} \leftarrow 0$}
		\EndIf
	\EndFor
	\State{Compute $B^{*}$ /* This takes $O(n^{\omega})$ time */}\label{line:closure}
	\If{$b^{*}_{ij} > 0$ for all $i$ and $j$}
		\State{$\alpha \leftarrow t$}
	\Else
		\State{$\beta \leftarrow t$}
	\EndIf
\EndWhile
\State{$\Theta \leftarrow \alpha$}
\end{algorithmic}
\end{algorithm}

Obviously the iteration over the while-loop in Algorithm \ref{alg:gb} is performed $O(\log{c})$ times. Thus the total time complexity becomes $O(n^{\omega}\log{c})$. If $c$ is large, say $O(2^{n})$, the algorithm is not very efficient, taking $O(n)$ halvings of the possible ranges of $\Theta$. In this case, we sort edges in ascending
order of $cap(i,j)$. Since there are at most $m$ possible values of capacities, where $m$ is the number of edges,
doing binary search over the sorted edges gives us $O(n^{\omega}\log{m}) = O(n^{\omega}\log{n})$\footnote{$O(m\log{n})$ can be achieved by determining strongly connected components in $O(m)$ time (\cite{Tarjan}) rather than computing the transitive closure in each iteration.}. We note that the actual bottleneck path can be obtained with an extra polylog factor using the witness technique by \cite{AGMN}. We omit the correctness proof of Algorithm \ref{alg:gb} since it is essentially the same as the proof of Lemma \ref{lem:gc}.

\section{Algebraic treatment for graph paths problems}
Let $S$ be a closed semi-ring, that is, $S = (S, +, \cdot, 0, 1)$. The operations $+$ and $\cdot$ are associative, $+$ is commutative, and $\cdot$ distributes over $+$. $0$ is the unit element with $+$ and $1$ is the unit element with $\cdot$. For $a \in S$, the closure of $a$, $a^{*}$, is defined by $a^{*} = 1 + a + a^{2} + ...$.

Let us define matrices over semi-ring $S$. Let $O$ and $I$ correspond to $0$ and $1$ in the semi-ring. Let $M(S)$ be the set of all matrices of size $n$-by-$n$ for a fixed $n$. Let $I$ be the identity matrix and $O$ be the zero matrix. The multiplication and addition of two matrices are defined in the conventional way. For $A = \{a_{ij}\}$, $B = \{b_{ij}\}$ and $C = \{c_{ij}\}$, let C = A + B. Then $c_{ij}$ is defined by $c_{ij} = a_{ij} + b_{ij}$. Let $C = A \cdot B$. Then $c_{ij}$ is defined by:
$$
c_{ij} = \sum\limits_{k=1}^n \{a_{ik} \cdot b_{kj}\}
$$
\noindent
and the system $M(S) = (M(S), +, \cdot, O, I)$ becomes a closed semiring.

Let $R_{1}$ be the set of non-negative real numbers. $R_{1}$ is intended to represent edge costs. Let $min$ and $+$ on $R_{1}$ correspond to $+$ and $\cdot$ of the semi-ring. Then $R_{1} = (R_{1}, min, +, \infty, 0)$ becomes a closed semi-ring, called the distance semi-ring. The system $M(R_{1}) = (M(R_{1}), +, \cdot, O, I)$ becomes a closed semi-ring, where addition is the component-wise addition and for $C = A \cdot B$, $c_{ij}$ is defined by:
$$
c_{ij} = \min\limits_{k=1}^n \{a_{ik}+ b_{kj}\}
$$
\noindent
The meaning of $A^{\ell}$ is to give for the $(i,j)$ element the shortest distance from vertex $i$ to vertex $j$ that uses $\ell$ edges. Thus the closure $A^{*}$ gives the shortest distances for all pairs of vertices. As the shortest distance from any $i$ to any $j$ can be determined by the paths of at most $n-1$ edges, we have $A^{*} = I + A + A^{2} + ... = I + A + A^{2} ... + A^{n-1}$. $A^{*}$ is the solution to the All Pairs Shortest Paths (APSP) problem.

Let $R_{2}$ be the set of non-negative real numbers. $R_{2}$ is intended to represent edge capacities. Let $max$ and $min$ on $R_{2}$ correspond to $+$ and $\cdot$ of the semi-ring, called the max-min semi-ring. Then $R_{2} = (R_{2}, max, min, 0, \infty)$ becomes a closed semi-ring. The system $M(R_{2}) = (M(R_{2}), +, \cdot, O, I)$ becomes a closed semi-ring, where addition is the component-wise addition and for $C = A \cdot B$, $c_{ij}$ is defined by:
$$
c_{ij} = \max\limits_{k=1}^n \{\min{\{a_{ik},b_{kj}\}}\}
$$
\noindent
The meaning of $A^{\ell}$ is to give for the $(i,j)$ element the maximum bottleneck values of all paths from vertex $i$ to vertex $j$ that uses $\ell$ edges. Thus the closure $A^{∗}$ gives the bottleneck values for all pairs of vertices. As the bottleneck value from any $i$ to any $j$ can be determined by the paths of at most $n-1$ edges, we have $A^{*} = I + A + A^{2} + ... = I + A + A^{2} + ... + A^{n-1}$. $A^{*}$ is the solution to the All Pairs Bottleneck Paths (APBP) problem.

\section{The distance/flow semi-ring}
\label{sec:dfsr}
So far we have defined the distance semi-ring and max-min semi-ring independently. In this section we combine them and make a composite semi-ring called the distance/flow semi-ring. We define the $df$-pair, $(d,f)$, where $d$ and $f$ are from $R_{1}$ and $R_{2}$, respectively. That is, $d$ represents distance and $f$ represents flow. We define two orders on $df$-pairs.

\begin{definition}
Let $(d,f)$ and $(d',f')$ be two $df$-pairs. Then the merit order $\leq_{m}$ and the natural order $\leq_{n}$ are defined as:
$$
\begin{array}{rcl}
(d,f) \geq_{m} (d',f') & \Leftrightarrow & d \leq d' \land f \geq f' \\
(d,f) \leq_{n} (d',f') & \Leftrightarrow & d \leq d' \land f \leq f'
\end{array}
$$
\end{definition}

The meaning of the merit order is that a $df$-pair is more desirable if it has a higher $f$ for the same or lower value of $d$. If $(d,f) <_{n} (d',f')$, the $df$-pairs are incomparable under the merit order. Note that these two orders are partial orders on the direct product of integer sets.

\begin{definition}
The addition and multiplication on $df$-pairs $(d,f)$ and $(d',f')$ are defined as:
$$
\begin{array}{rcl}
(d,f) + (d',f') & = & (d,f), \text{if\ } (d,f) >_{m} (d',f') \\
	& = & (d',f'), \text{if\ } (d,f) <_{m} (d',f') \\
	& = & \{(d,f),(d',f')\}, \text{if\ } (d,f) <_{n} (d',f') \\
	& = & \{(d',f'),(d,f)\}, \text{if\ } (d,f) >_{n} (d',f') \\
(d,f) \cdot (d',f') & = & (d + d',\min{\{f,f'\}})
\end{array}
$$
\end{definition}

Note that the addition of $df$-pairs can result in a set of $df$-pairs that are incomparable under the merit order, and sorted in natural order. Intuitively speaking, addition of two $df$-pairs is to take a better route from two parallel connections from a vertex to another vertex and take both if they are incomparable. Multiplication is to compute the new $df$-pair for a serial connection. We sometimes omit the signs ``$\{  \}$'' for singletons.

The domain $R_{1} \times R_{2}$ is not closed under the $+$ operation. Thus we need to extend the domain to the power set of incomparable $df$-pairs, that is, the set of all subsets of incomparable $df$-pairs. We firstly define the $+$ operation on sets of incomparable $df$-pairs. Let $x$ and $y$ be sets of incomparable $df$-pairs, sorted in natural order. Let $z = x + y$, where $z$ can be computed as follows: We start with $x \cup y$, and remove all $df$-pairs from the union that are smaller than any other $df$-pair in the merit order. If multiples of equal $df$-pairs exist we remove all but one. Then $z$ is the resulting set of incomparable $df$-pairs, sorted in natural order. $x+y$ can be calculated in $O(|x| + |y|)$ time with Algorithm \ref{alg:add}. In this algorithm, the operation $a \Leftarrow x$ means that the element $a$ is removed from the set $x$, where $a$ is the first $df$-pair and $x$ is the set of incomparable $df$-pairs sorted in natural order. If $x$ is an empty set, that is $x = \phi$, $a \Leftarrow x$ results in $a = null$.

\begin{algorithm}
\caption{Add two sets of $df$-pairs}
\label{alg:add}
\begin{algorithmic}[1]
\algnotext{EndFor}
\algnotext{EndIf}
\algnotext{EndWhile}
\State{$z \leftarrow \phi$ /* $\phi$ means empty */ }
\State{$a \Leftarrow x$, $b \Leftarrow y$}
\While{$(a \neq null)$ and $(b \neq null)$}
	\If{$a$ and $b$ are incomparable}
		\If{$a <_{n} b$}\label{line:comp}
			\State{Append $a$ to $z$, $a \Leftarrow x$}\label{line:add_a}
		\Else
			\State{Append $b$ to $z$, $b \Leftarrow x$}\label{line:add_b}
		\EndIf
	\ElsIf{$a >_{m} b$}
		\State{$b \Leftarrow y$}\label{line:discard_b}
	\ElsIf{$b >_{m} a$}
		\State{$a \Leftarrow x$}\label{line:discard_a}
	\Else{ /* $a = b$ */}
		\State{Append $a$ to $z$, $a \Leftarrow x$, $b \Leftarrow y$}\label{line:same}
	\EndIf
\EndWhile
\If{$a \neq null$}
	\State{Append $a$ to $z$, Append $x$ to $z$ (if $x \neq \phi$)}
\EndIf
\If{$b \neq null$}
	\State{Append $b$ to $z$, Append $y$ to $z$ (if $y \neq \phi$)}
\EndIf
\end{algorithmic}
\end{algorithm}

\begin{theorem}
Algorithm \ref{alg:add} computes $x+y$ in $O(|x|+|y|)$ time, where $x$ and $y$ are both sets of incomparable $df$-pairs sorted in natural order.
\end{theorem}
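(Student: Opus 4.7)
The plan is to verify Algorithm \ref{alg:add} by the standard invariant-plus-counting paradigm used for merges of sorted lists, augmented with Pareto-front pruning. First I would record the structural fact that a set of merit-incomparable $df$-pairs sorted in natural order is strictly increasing in both coordinates $d$ and $f$ simultaneously: equality in either coordinate would force two pairs to be comparable under $\leq_m$, contradicting the input assumption. Consequently, the head of each input list attains both the minimum $d$ and the minimum $f$ among all as-yet-unconsumed elements of that list, and later elements have strictly larger $d$ and strictly larger $f$.

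For correctness, the central step is a non-dominance lemma: whenever the algorithm appends an element $e$ to $z$, no unconsumed element of $x$ or $y$ can merit-dominate $e$. I would verify this by cases on the branch that triggered the append. In the incomparable case at line \ref{line:add_a}, where $e$ is the head with $a <_n b$, the strict-monotonicity fact forces every subsequent element of $x$ to have $d > d_e$, and every subsequent element of $y$ to have $d \geq d_b > d_e$, neither compatible with merit-dominance; the symmetric case at line \ref{line:add_b} and the equality case at line \ref{line:same} are analogous. The discard branches at lines \ref{line:discard_b} and \ref{line:discard_a} are correct a fortiori, since the discarded element is merit-dominated by the current head of the other list. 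Combining these, I would maintain the invariant that after each iteration $z$ equals the merit-Pareto-optimal subset of the already-consumed prefix of $x\cup y$, sorted in natural order and deduplicated. The sortedness part of the invariant uses the observation that the heads only advance, so each newly appended pair has larger $d$ and larger $f$ than the previously appended one.

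For the complexity, each iteration of the while loop performs $O(1)$ work and strictly advances at least one of the two pointers into $x$ and $y$; hence the loop runs at most $|x|+|y|$ times, and the final tail-flush appends at most $|x|+|y|$ further elements. The total running time is therefore $O(|x|+|y|)$.

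The main obstacle I anticipate is the correctness of the tail-flush step. When one list is exhausted, the algorithm appends the remainder of the other list wholesale, without further merit comparisons. I must rule out that some already-consumed element of the exhausted list could merit-dominate a freshly appended tail element. The argument splits on whether the consumed element was appended or discarded earlier: an appended element $b_j$ satisfied $b_j <_n$ (some past head of $x$), so its $f$ is strictly smaller than that of any subsequent $x$-element; a discarded element $b_j$ was dominated by some past head of $x$, which again forces $f_{b_j}$ to be strictly smaller than the $f$-values of all later $x$-elements. Either way, $b_j$ cannot merit-dominate any of the flushed tail elements, and the symmetric argument closes the case where $x$ outlives $y$.
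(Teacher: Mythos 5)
Your proof is correct and follows essentially the same loop-invariant merge argument as the paper: discarded pairs are merit-dominated, appended pairs are pairwise incomparable and emitted in natural order, and the running time follows because at least one list shrinks per iteration. You are in fact more thorough than the published proof --- your strict-monotonicity observation, the explicit non-dominance lemma, and especially your treatment of the tail-flush step (which the paper's proof does not address at all) close real gaps in the paper's rather terse argument.
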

\begin{proof}
We prove that the set of $df$-pairs, $z$, accumulates incomparable $df$-pairs in natural order. Observe that $df$-pairs lower in merit order are discarded at lines \ref{line:discard_b} and \ref{line:discard_a}. We discard $df$-pairs until $a$ and $b$ are incomparable or $a=b$, at which point we append one $df$-pair to $z$ (at lines \ref{line:add_a} or \ref{line:add_b} or \ref{line:same}). This ensures that all resulting $df$-pairs in $z$ are incomparable. When appending a $df$-pair to $z$ we ensure that the smaller $df$-pair (in natural order) is appended (line \ref{line:comp}). This ensures that all $df$-pairs in $z$ are sorted in natural order. $O(|x|+|y|)$ is obvious since at least one of $x$ or $y$ becomes shorter in each iteration.
\end{proof}

Now we define the $\cdot$ operation on sets of incomparable $df$-pairs. Let $x$ and $y$ be sets of incomparable $df$-pairs, sorted in natural order. Let $z = x \cdot y$, where $z$ can be computed as follows: Let $x \times y = \{a \cdot b | a \in x \land b \in y\}$. From $x \times y$, we extract all incomparable $df$-pairs. If duplicate $df$-pairs exist in the extracted set, we keep one and discard the rest. Then $z$ is the resulting set sorted in natural order. A straightforward method to compute $x \cdot y$ would take $O(|x| * |y|)$ time. We can reduce the time complexity to $O(|x| + |y|)$ with Algorithm \ref{alg:cdot}.

\begin{algorithm}
\caption{Multiply two sets of $df$-pairs}
\label{alg:cdot}
\begin{algorithmic}[1]
\algnotext{EndFor}
\algnotext{EndIf}
\algnotext{EndWhile}
\State{$z \leftarrow \phi$ /* $\phi$ means empty */ }
\State{$a \Leftarrow x$, $b \Leftarrow y$}
\While{$(a \neq null)$ and $(b \neq null)$}
	\State{Append $a \cdot b$ to $z$}
	\If{$a.f < b.f$}
		\State{$a \Leftarrow x$}\label{line:discard}
	\ElsIf{$b.f < a.f$}
		\State{$b \Leftarrow y$}
	\Else
		\State{$a \Leftarrow x$, $b \Leftarrow y$}
	\EndIf
\EndWhile
\end{algorithmic}
\end{algorithm}

\begin{theorem}
Algorithm \ref{alg:cdot} computes $x \cdot y$ in $O(|x| + |y|)$ time, where $x$ and $y$ are both sets of incomparable $df$-pairs sorted in natural order.
\end{theorem}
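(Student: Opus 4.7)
The plan is to establish correctness---namely that the emitted set equals the merit-order Pareto frontier of $x \times y$, sorted in natural order---and then read off the linear bound from the two-pointer advance pattern. Writing $x = \{a_1,\ldots,a_{|x|}\}$ and $y = \{b_1,\ldots,b_{|y|}\}$, I would first record a useful consequence of the hypotheses: incomparability of the inputs forces the strict inequalities $a_1.f < a_2.f < \cdots$ and $b_1.f < b_2.f < \cdots$, since any equality of $f$-values within one list would let the entry with smaller $d$ merit-dominate the other.

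Everything then hangs on a single loop invariant: at the top of each iteration, with current pointers $(a_k,b_l)$ and $f^* := \min\{a_k.f, b_l.f\}$, the element $a_k$ is the smallest-index member of $x$ with $a_k.f \geq f^*$, and likewise $b_l$ in $y$. The base case $k=l=1$ is immediate. For the inductive step I would examine the three branches of the algorithm and verify that the new $f^*$ equals the next larger value in the merged sorted sequence of the two $f$-coordinate lists, so that strict monotonicity keeps the advanced pointers minimal with $f \geq$ new $f^*$. Soundness then falls out: any hypothetical merit-dominator $(a_{k'},b_{l'})$ of $a_k \cdot b_l$ must satisfy $\min\{a_{k'}.f, b_{l'}.f\} \geq f^*$, hence $k' \geq k$ and $l' \geq l$ by the invariant, whence $a_{k'}.d + b_{l'}.d \geq a_k.d + b_l.d$ forbids strict merit-domination. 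For completeness I would identify the Pareto frontier of $x \times y$ with the distinct values of $\min\{a_k.f, b_l.f\}$ bounded by $\min\{a_{|x|}.f, b_{|y|}.f\}$---precisely the merged-list values above---and observe that the algorithm visits each such value exactly once, emitting at it the unique minimum-$d$ realizer delivered by the invariant. Natural-order sortedness of the output is automatic, since each advance strictly increases the $d$-coordinate of $a_k \cdot b_l$.

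For the time bound, each iteration of the while-loop executes at least one of $a \Leftarrow x$ or $b \Leftarrow y$, so the loop body runs at most $|x|+|y|$ times with $O(1)$ work per iteration (one $df$-pair multiplication, one append, and a constant number of comparisons). The main obstacle I anticipate is the completeness half of correctness: it is easy to certify that emitted pairs are Pareto-optimal, but one must also argue that no Pareto optimum is silently skipped between consecutive emissions, and this is precisely where the invariant---together with the identification of Pareto optima with merged $f$-values---does the work.
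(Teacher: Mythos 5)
Your proposal is correct, and it rests on the same underlying idea as the paper's proof: the two-pointer merge in which the pointer holding the smaller $f$-value is advanced because its current element can contribute nothing further. However, you organize the argument quite differently and far more rigorously. The paper's proof is a three-sentence sketch that only justifies the discard rule (``if $a.f < b.f$, $a$ can no longer combine with remaining $df$-pairs in $y$ to generate an incomparable $df$-pair''); it never states an invariant, never verifies that the emitted pairs are pairwise incomparable (nontrivial, since the algorithm appends $a \cdot b$ unconditionally at the top of every iteration), and never argues completeness, i.e.\ that no Pareto-optimal product of $x \times y$ is skipped. Your proof supplies exactly these missing pieces: the observation that incomparability plus natural-order sorting forces strictly increasing $d$ and $f$ within each input, the loop invariant that $a_k$ and $b_l$ are the minimal-index elements with $f \geq f^* = \min\{a_k.f, b_l.f\}$, the identification of the achievable flow values of the Pareto frontier with the merged $f$-values up to $\min\{a_{|x|}.f, b_{|y|}.f\}$, and the monotonicity of emitted $d$- and $f$-coordinates, which yields pairwise incomparability and natural-order sortedness of the output. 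What the paper's version buys is brevity; what yours buys is an actual proof --- in particular the completeness half, which you correctly flag as the real content and which the paper leaves entirely implicit. The one stylistic caveat is that parts of your inductive step are described rather than carried out (``I would examine the three branches and verify\ldots''), but the claimed facts are all true and the verification is routine.
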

\begin{proof}
Suppose we have some accumulation of incomparable $df$-pairs in natural order. If $a.f < b.f$, $a$ can no longer combine with remaining $df$-pairs in $y$ to generate an incomparable $df$-pair, hence $a$ is discarded (line \ref{line:discard}). Similar reasoning applies to the case of $a.f > b.f$. If $a.f = b.f$, both are discarded.
\end{proof}

Now let us extend the distance/flow semi-ring to matrices. Let $M$ be the set of all possible $n$-by-$n$ matrices where each element of the matrix is a set of incomparable $df$-pairs sorted in natural order. Then the zero matrix, $O$, has $\{(\infty,0)\}$ as all its elements. The unit matrix, $I$, is defined as the matrix with $\{(0,\infty)\}$ for the diagonals and $\{(\infty,0)\}$ for all other elements. The multiplication and addition of these matrices are defined in the usual way using operations $+$ and $\cdot$, respectively, as defined earlier. Obviously $(M, +, \cdot, O, I)$ forms a closed semi-ring.

\section{The SP-AF problem}
Suppose there exists multiple parallel paths of varying distances and bottlenecks from a source vertex to a destination vertex. We may only be able to push a small amount of flow through a shorter path because the path may have a relatively small bottleneck value. A longer path may support a bigger flow. Clearly it is useful to determine all shortest paths for varying flow amounts. Let $t$ be the number of distinct edge capacities. $t = m$ if all edge capacities are distinct. We refer to the distinct edge capacities as \emph{maximal flows}. Then the Shortest Paths for All Flows (SP-AF) problem is to compute the shortest paths for all maximal flow values for pairs of vertices.

Using the distance/flow semi-ring as defined in Section \ref{sec:dfsr}, we can provide a formal definition for the SP-AF problem as the problem of computing all incomparable $df$-pairs for pairs of vertices, preferably sorted in natural order. Then each $df$-pair corresponds to a path between the vertices. The All Pairs Shortest Paths for All Flows (APSP-AF) problem is to solve the SP-AF problem for all possible pairs of vertices on the graph.

We define the $n$-by-$n$ matrix $A = \{a_{ij}\}$ by $a_{ij} = (cost(i,j),cap(i,j))$, for all vertex pairs $(i,j)$ in $V \times V$. Then the closure $A^{*} = I + A + A^{2} + ... = I + A + A^{2} + ... + A^{n-1}$ is the solution to the APSP-AF problem. If we perform repeated squaring on $(I + A)$, we can compute $A^{*}$ in $O(tn^{3}\log{n})$ time. We can solve the APSP-AF problem in $O(tn^{3})$ time by generalizing the well known APSP algorithm by \cite{Floyd}, as shown in Algorithm \ref{alg:apspaf}.

\begin{algorithm}
\caption{Solve the APSP-AF problem}
\label{alg:apspaf}
\begin{algorithmic}[1]
\algnotext{EndFor}
\algnotext{EndIf}
\algnotext{EndWhile}
\State{A = A + I}
\For{$k = 1$ to $n$}
	\For{$i = 1$ to $n$; $j = 1$ to $n$}
		\State{$a_{ij} = a_{ij} + a_{ik} \cdot a_{kj}$}\label{line:relax}
	\EndFor
\EndFor
\end{algorithmic}
\end{algorithm}

\begin{theorem}
Algorithm \ref{alg:apspaf} computes the closure $A^{*}$ in $O(tn^{3})$ time.
\end{theorem}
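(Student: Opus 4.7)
The plan is to prove correctness by an inductive argument analogous to the standard Floyd--Warshall proof, and then bound the running time by bounding the size of each matrix entry.

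For correctness, I would show by induction on $k$ that after the $k$-th iteration of the outer loop, $a_{ij}$ equals the set of incomparable $df$-pairs representing the best paths from $i$ to $j$ whose intermediate vertices are all contained in $\{1,\ldots,k\}$. The base case $k=0$ holds because after the line $A = A + I$, each entry $a_{ij}$ is the singleton $(cost(i,j),cap(i,j))$ for $i\neq j$ and $(0,\infty)$ for $i=j$, which is exactly the collection of paths with no intermediate vertices. For the inductive step, a best path from $i$ to $j$ using intermediate vertices in $\{1,\ldots,k\}$ either avoids $k$ (captured by the old $a_{ij}$) or decomposes uniquely into an $i\to k$ piece and a $k\to j$ piece each using intermediate vertices in $\{1,\ldots,k-1\}$, which is captured by $a_{ik}\cdot a_{kj}$ under the distance/flow semi-ring. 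Taking the $+$ of these two collections yields the updated set of incomparable $df$-pairs, so line \ref{line:relax} performs precisely the required relaxation. After $k=n$, $a_{ij}$ is the $(i,j)$-entry of $A^{*}$.

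For the running time, the central observation is that every entry $a_{ij}$ maintained by the algorithm has size at most $t$. Indeed, each $f$-component stored anywhere in $A$ is produced from the initial capacities by repeated application of $\min$, so every $f$-value that ever appears in any entry of the matrix is one of the $t$ distinct edge capacities. Since the $df$-pairs in any entry are pairwise incomparable and sorted in natural order, their $f$-coordinates must be strictly increasing, and hence each entry contains at most $t$ pairs.

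Combining these facts, each execution of line \ref{line:relax} computes $a_{ik}\cdot a_{kj}$ and then adds it to $a_{ij}$; by the two previous theorems both the $\cdot$ and the $+$ operations run in time linear in the total size of the operands, which is $O(t)$. The three nested loops perform $n^{3}$ such updates, giving total time $O(tn^{3})$. The main subtlety I would want to make sure of is the size bound on the entries — this is what prevents a blow-up in the sizes of the incomparable sets and is what lets Floyd's algorithm generalize cleanly to the distance/flow semi-ring.
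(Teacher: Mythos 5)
Your proposal is correct and follows essentially the same route as the paper: the same Floyd--Warshall-style induction on $k$ for correctness, and the same $O(t)$ bound per relaxation step for the running time. The only difference is that you explicitly justify why each matrix entry holds at most $t$ incomparable $df$-pairs (every $f$-value is one of the $t$ distinct capacities and incomparable pairs must have distinct $f$-coordinates), a point the paper dismisses as obvious; note also that your base case correctly gives $a_{ii}=(0,\infty)$, where the paper's proof contains a typo ($(\infty,0)$).
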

\begin{proof}
The time is obvious as line \ref{line:relax} takes $O(t)$ time. Correctness proof follows. We prove that the best incomparable $df$-tuples are obtained from paths that go though vertices $1, 2, ..., k - 1$ at the beginning of the $k^{th}$ iteration. The basis is $k = 1$. $A$ is initialized by $A + I$, which means $a_{ii} = \{(\infty, 0)\}$, and $a_{ij} = (cost(i,j),cap(i,j))$ for $i \neq j$. Suppose the hypothesis is correct for $k$. Then at line \ref{line:relax}, the best tuple without going through $k$ and going through $k$ are merged and the best $df$-pairs are chosen.
\end{proof}

\section{Concluding remarks}
We showed an asymptotic improvement on the time complexity of the 1-center GC problem. The center under the average distance measure is to minimize $\Delta = \min_{i=1}^{n}\sum_{j=1}^{n} d_{ij}^{*}$. Our algorithm can also be applied for this variation of the GC problem. Using a similar approach, we also improved the asymptotic time complexity of the GB problem. The key to our achievement was circumventing the computation of APSP and APBP, for the problems of GC and GB, respectively, with a clever use of the simple binary search method.

We then combined the SP problem and the BP problem to introduce the new SP-AF problem, where the distance/flow semi-ring plays an important role. We showed that it is straightforward to generalize the APSP algorithm to solve the APSP-AF problem using the sets of incomparable $df$-pairs. Improvements in time complexities for the SP-AF problems, such as the single source problem, will be on the research agenda for the future.

\end{document}